\theoremstyle{plain}
\newtheorem{thm}{\protect\theoremname}
\providecommand{\theoremname}{Theorem}
\setlist[description]{style=multiline,topsep=4pt,align=parright}
\let\reftagform@=\tagform@
\def\tagform@#1{\maketag@@@{(\ignorespaces\textcolor{black}{#1}\unskip\@@italiccorr)}}
\newcommand{\iref}[1]{\textup{\reftagform@{\tcr{\ref{#1}}}}}
\begin{document}
	
\title{Design Fast Algorithms For Hodgkin-Huxley Neuronal Networks}
\author{Zhong-Qi Kyle Tian and Douglas Zhou\textsuperscript{\footnote{zdz@sjtu.edu.cn}}}
\affil{School of Mathematical Sciences, MOE-LSC, and Institute of Natural Sciences, Shanghai Jiao Tong University, Shanghai, China}

\date{}
\maketitle

\begin{abstract}
	The stiffness of the Hodgkin-Huxley (HH) equations during an action
	potential (spike) limits the use of large time steps. We observe that
	the neurons can be evolved independently between spikes, $i.e.,$
	different neurons can be evolved with different methods and different
	time steps. This observation motivates us to design fast algorithms
	to raise efficiency. We present an adaptive method, an exponential
	time differencing (ETD) method and a library-based method to deal
	with the stiff period. All the methods can use time steps one order
	of magnitude larger than the regular Runge-Kutta methods to raise
	efficiency while achieving precise statistical properties of the original
	HH neurons like the largest Lyapunov exponent and mean firing rate.
	We point out that the ETD and library methods can stably achieve maximum
	8 and 10 times of speedup, respectively. 
	
	\noindent \textbf{Keywords} Fast algorithm; Hodgkin-Huxley networks;
	Adaptive method; Exponential time differencing method; Library method
\end{abstract}

\section{Introduction}

The Hodgkin-Huxley (HH) system \cite{hodgkin1952quantitative,hassard1978bifurcation,dayan2003theoretical}
is widely used to simulate neuronal networks in computational neuroscience
\cite{hodgkin1952quantitative,hassard1978bifurcation,dayan2003theoretical}.
One of its attractive properties is that it can describe the detailed
generation of action potentials realistically, $e.g.$, the spiking
of the squid's giant axon. For numerical simulation in practice, it
is generally evolved with simple explicit methods, $\mathit{e.g.}$,
the Runge-Kutta scheme, and often with a fixed time step. However,
the HH equations become stiff when the HH neuron fires a spike and
we have to take a sufficiently small time step to avoid stability
problems. The stiffness of HH equations limits its application and as a substitution, the
conductance-based integrate-and-fire (I\&F) systems are often used
in the consideration of efficiency \cite{somers1995emergent,mclaughlin2000neuronal,cai2005architectural,rangan2007fast}
although detailed generation of action potentials in I\&F equations
are omitted. Therefore, it is meaningful to solve the stiff problem
in HH system and design fast algorithms allowing a large time step.

The stiff problem of HH system only occurs during the action potentials
which is from the activities of sodium and potassium ions channels.
Then a natural idea is to use the adaptive method so that we may use
a larger time step outside the stiff period. It works well for a single
HH neuron but fails in large-scaled HH networks \cite{borgers2013exponential}
since there are firing events almost everywhere and the obtained time
step in standard adaptive method is very small. One key point
we observed is that the neurons interact only at the spike moments,
so they can be evolved independently between spikes. Our strategy
for the adaptive method is that we use a large time step to evolve
the network and split it up into subintervals (small time step) to
evolve the neurons that are during the stiff period. 

The exponential time differencing (ETD) method \cite{petropoulos1997analysis,cox2002exponential,borgers2013exponential,ju2018energy},
proposed for stiff systems, may be another effective and efficiency method
for HH equations. The idea of ETD method is that, in each time step,
the equations can be written into a dominant linear term and a relatively
small residual term. By multiplying an integrating
factor, the evolving of HH equations turns to estimate the integral
of the residual, which allows a large time step without causing stability
problem. The recent study \cite{borgers2013exponential} gives a second-order
ETD method for a single HH neuron.
However, for networks, an efficient and high accurate ETD method has
to consider the influence of spike-spike interactions. When evolving
a network for one single time step, there may be several neurons firing
in the time step. Unknowing when and which neurons will fire, we have
to wait until the end of the evolution to consider the spike-induced
effects \cite{hansel1998numerical,shelley2001efficient,borgers2013exponential},
$e.g.$, the change of conductance and membrane potential of postsynaptic
neurons. Without a carefully recalibration, the accuracy of the network
is only the first-order. We point out that the spike-spike correction
procedure introduced in Ref. \cite{rangan2007fast} can solve this
problem by iteratively sorting the possible spike times, updating
the network to the first one and recomputing all future spikes within
the time step. In this Letter, we give a fourth-order ETD method for networks with spike-spike correction.

The above adaptive and ETD methods can achieve high quantitative accuracy
of HH neurons like the spike shapes. But sometimes the goal of numerical
simulations is to achieve qualitative insight or statistical accuracy,
since with uncertainty in the model parameters the HH equations can
only approximately describe the biological reality. In this case,
we offer a library method \cite{sun2009library} to raise efficiency
which can avoid the stiff period by treating the HH neuron as an I\&F
one. The idea of the library method is that once a HH neuron's membrane
potential reaches the threshold, we stop evolving its HH equations
and restart after the stiff period with reset values interpolated
from a pre-computed high resolution data library. Therefore, we avoid
the stiff period and can use a large time step to evolve the HH model.
Compared with the original library method in Ref. \cite{sun2009library},
our library method can significantly simply the way to build the library
and improve the precision of library (see section \ref{sec:Library method}). 

We use the regular fourth-order Runge-Kutta method (RK4) to compare
the performance of the adaptive, ETD and library methods with spike-spike
correction procedure included in all the methods. The three advanced
methods can use time steps one order of magnitude larger than that
in the regular RK4 method while achieving precise statistical properties
of the HH model, $e.g.,$ firing rates and chaotic dynamical property.
We also give a detailed comparison of efficiency for the given methods.
Our numerical results show that the ETD and library methods can achieve
stable high times of speedup with a maximum 8 and 10 times, respectively. 

The outline of the paper is as follows. In Section 2, we give the
equations of the HH model. In Section 3, 4 and 5, we describe the
regular RK4, adaptive and ETD method, respectively. In Section 6,
we give the details of how to build and use the data library. In Section
7, we give the numerical results. We discuss and conclude in Section
8.

\section{The model \label{sec:The-model}}

The dynamics of the $i$th neuron of a Hodgkin-Huxley (HH) network
with $N$ excitatory neurons is governed by 

\begin{equation}
C\frac{dV_{i}}{dt}=-(V_{i}-V_{Na})G_{Na}m_{i}^{3}h_{i}-(V_{i}-V_{K})G_{K}n_{i}^{4}-(V_{i}-V_{L})G_{L}+I_{i}^{\textrm{input}}\label{eq: V of HH}
\end{equation}

\begin{equation}
\frac{dz_{i}}{dt}=(1-z_{i})\alpha_{z}(V_{i})-z_{i}\beta_{z}(V_{i})\text{ for }z=m,h\text{ and }n\label{eq:mhn of HH}
\end{equation}
where $V_{i}$ is the membrane potential, $m_{i}$, $h_{i}$ and $n_{i}$
are gating variables, $V_{Na},V_{K}$ and $V_{L}$ are the reversal
potentials for the sodium, potassium and leak currents, respectively,
$G_{Na},G_{K}$ and $G_{L}$ are the corresponding maximum conductances.
$I_{i}^{\textrm{input}}=-G_{i}(t)(V_{i}-V_{G})$ is the input current
with 
\begin{equation}
\frac{dG_{i}(t)}{dt}=-\frac{G_{i}(t)}{\sigma_{r}}+H_{i}(t)
\end{equation}

\begin{equation}
\frac{dH_{i}(t)}{dt}=-\frac{H_{i}(t)}{\sigma_{d}}+f\sum_{l}\delta(t-s_{il})+\sum_{j\neq i}\sum_{l}S_{ij}\delta(t-\tau_{jl})\label{eq:f input}
\end{equation}
where $V_{G}$ is the reversal potential, $G_{i}(t)$ is the conductance,
$H_{i}(t)$ is an additional parameter to smooth $G_{i}(t)$, $\sigma_{r}$
and $\sigma_{d}$ are fast rise and slow decay time scale, respectively,
and $\delta(\cdot)$ is the Dirac delta function. The second term
in Eq. (\ref{eq:f input}) is the feedforward input with magnitude
$f$. The input time $s_{il}$ is generated from a Poisson process
with rate $\nu$. The third term in Eq. (\ref{eq:f input}) is the
synaptic current from synaptic interactions in the network, where
$S_{ij}$ is the coupling strength from the $j$th neuron to the $i$th
neuron, $\tau_{jl}$ is the $l$th spike time of $j$th neuron. The
forms of $\alpha$ and $\beta$ and other model parameters are given
in Appendix.

When the voltage $V_{i}$, evolving continuously according to Eqs.
(\ref{eq: V of HH}, \ref{eq:mhn of HH}), reaches the threshold $V^{\textrm{th}}$,
we say the $i$th neuron fires a spike at this time. Instantaneously,
all its postsynaptic neurons receive this spike and their corresponding
parameter $H$ jumps by an appropriate amount $S_{ji}$ for the $j$th
neuron. For the sake of simplicity, we mainly consider a homogeneously
and randomly connected network with $S_{ij}=A_{ij}S$ where $\mathbf{A}=(A_{ij})$
is the adjacency matrix and $S$ is the coupling strength. But note
that the conclusions shown in this paper will not change if we extend
the given methods to more complicated networks, $e.g.,$ networks
of both excitatory and inhibitory neurons, more realistic connectivity
with coupling strength following the typically Log-normal distribution
\cite{song2005highly,ikegaya2012interpyramid}. 

\section{Regular method}

We first introduce the regular Runge-Kutta fourth-order scheme (RK4)
with fixed time step $\Delta$t to evolve the HH model. For the easy
of illustration, we use the vector 
\begin{equation}
X_{i}(t)=(V_{i}(t),m_{i}(t),h_{i}(t),n_{i}(t),G_{i}(t),H_{i}(t))\label{eq:stats X}
\end{equation}
to represent the variables of the $i$th neuron. The neurons interact
with each other through the spikes by changing the postsynaptic state
of $X$, so it is important to obtain accurate spike sequences, $e.g.,$
at least with an accuracy of fourth-order. We determine the spike
times as follows \cite{hansel1998numerical,shelley2001efficient}.
If neuron $i$ fires in $[t_{n},t_{n+1}]$, $i.e.,$ $V_{i}(t_{n})<V^{\text{th}}$
and $V_{i}(t_{n+1})\geq V^{\text{th}}$, we can use the membrane potential
and its derivative at $t=t_{n}$ and $t=t_{n+1}$: $V_{i}(t_{n}),\frac{dV_{i}}{dt}(t_{n}),V_{i}(t_{n+1}),\frac{dV_{i}}{dt}(t_{n+1})$
to perform a cubic Hermite interpolation $p(t)$ to decide the spike
time by finding the solution $t_{\text{spike}}$ of $p(t)=V^{\text{th}}$
with $t_{n}<t_{\text{spike}}$$\leq t_{n+1}$. Then the obtained spike
time $t_{\text{spike}}$ has an accuracy of fourth-order given that
the membrane potential and its derivative has an accuracy of fourth-order
as well. 

In between spikes, the RK4 method can achieve a fourth-order of accuracy.
However, for a spike-containing time step, it requires a careful recalibration
of $X$ to account for the spike-spike interaction. For example, consider
2 bidirectionally connected neurons $1$ and $2$. If neuron 1 fires
in the time step, then it is necessary to recalibrate $X_{2}$ to
achieve an accuracy of fourth-order. If neuron 2 also fires, then
$X_{1}$ and spike time of neuron 1 is imprecise and we should recalibrate
them. Then again $X_{2}$ and the spike time of neuron 2 is imprecise. 

\begin{figure}[H]
	\begin{centering}
		\includegraphics[width=0.5\textwidth]{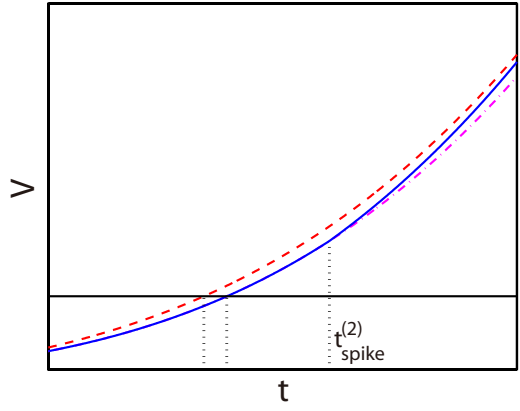}
		\par\end{centering}
	\caption{Local error analysis of neuron $1$ for $t_{0}\leq t\leq t_{0}+\Delta t$
		. The blue solid line is true membrane potential $v_{1}$, magenta
		dashdot line is true membrane potential $\tilde{v}_{1}$ without considering
		the spike from neuron 2 and the red dashed line is the cubic Hermite
		interpolation $p_{1}$. The solid horizontal line is the threshold
		and the dotted vertical lines are spike times. \label{fig:illustrate SSC} }
\end{figure}

We take the spike-spike correction procedure \cite{rangan2007fast}
to solve this problem. The strategy is that we preliminarily evolve
neuron 1 and 2 considering only the feedforward input in the time
step from $t_{0}$ to $t_{0}+\Delta t$ and decide the spike times
$t_{\text{spike}}^{(1)}$ and $t_{\text{spike}}^{(2)}$ by a cubic
Hermite interpolation. Suppose that $t_{\text{spike}}^{(1)}<t_{\text{spike}}^{(2)}$.
Let $v_{1}(t)$ denote the true membrane potential of neuron 1, $\tilde{v}_{1}(t)$
denote the true membrane potential of neuron 1 without considering
the spike of neuron 2 fired at $t_{\text{spike}}^{(2)}$ and $p_{1}(t)$
be the cubit Hermite interpolation for neuron 1 as shown in Fig. \ref{fig:illustrate SSC}.
Then we have 
\begin{equation}
p_{1}(t)-\tilde{v}_{1}(t)=O(\Delta t^{4})\text{ for }t_{0}\leq t\leq t_{0}+\Delta t.
\end{equation}
Note that the spike from neuron 2 starts to affect at time $t_{\text{spike}}^{(2)}$,
$i.e.$, 
\begin{equation}
v_{1}(t)=\tilde{v}_{1}(t)\text{ for }t_{0}\leq t\leq t_{\text{spike}}^{(2)}.
\end{equation}
Therefore, the preliminarily computed $t_{\text{spike}}^{(1)}$ indeed
has an accuracy of four-order as shown in Fig. \ref{fig:illustrate SSC}.
Then we can update neuron 1 and 2 from $t_{0}$ to $t_{\text{spike}}^{(1)}$,
accept the spike time of neuron 1 at $t_{\text{spike}}^{(1)}$, and
evolve them to $t_{0}+\Delta t$ to obtain spike time of neuron 2,
$X_{1}(t_{0}+\Delta t)$ and $X_{2}(t_{0}+\Delta t)$ with an accuracy
of fourth-order. For large networks, the strategy is the same and
detailed algorithm of regular RK4 scheme is given in Algorithm 1.

\begin{algorithm}[H] 	  
	\caption{Regular RK4 algorithm} 
	\KwIn{An initial time $t_0$, time step $\Delta t$, feedforward input times $\{s_{il}\}$} 	
	\KwOut{$\{X_i(t_0+\Delta t)\}$ and $\{\tau_{il}\}$(if any fired)} 
	Preliminarily evolve the network from $t_0$ to $t_0+\Delta t$ to find the first synaptic spike: 
	
	\For{$i = 1$ to $N$} 
	{	 		 	
		Let $M$ denote the total number of feedforward spikes of the $i$th neuron within $[t_0, t_0+\Delta t]$ and sort them into an increasing list $\{T_m^{\text{sorted}}\}$. Then we extend this notation such that 		$T_0^{\text{sorted}} := t_0$ and $T_{M+1}^{\text{sorted}} := t_0+\Delta t$.\\	 			
		\For{$m = 1$ to $M+1$ } 		
		{ 		
			Advance the equations for the $i$th HH neuron from  $T_{m-1}^{\text{sorted}}$ to $T_m^{\text{sorted}}$ using the standard RK4 scheme. Then update the conductance $H_i(T_m^{\text{sorted}})$ by adding $f$.\\			
			
			\If{$V_i(T_{m-1}^{\text{sorted}}) < V^{\textrm{th}}$, $V_i(T_m^{\text{sorted}}) \geq V^{\textrm{th}}$}		
			{ 	
				We know that the $i$th neuron spiked during $[T_{m-1}^{\text{sorted}}, T_{m}^{\text{sorted}}]$.\\						
				Find the spike time $t_{\text{spike}}^{(i)}$ by a cubic Hermite interpolation using the values $V_i(T_{m-1}^{\text{sorted}})$, $V_i(T_m^{\text{sorted}})$, 		$\frac{dV_i}{dt}(T_{m-1}^{\text{sorted}})$ and $\frac{dV_i}{dt}(T_m^{\text{sorted}})$.\\ 	 		
			} 	
			
		} 
	}
	
	\While{$\{t_{\text{spike}}^{(i)}\}$ is non-empty}	 
	{
		Find the minimum of $\{t_{\text{spike}}^{(i)}\}$, say $t_{\text{spike}}^{(i_1)}$.			 	
		Evolve neuron $i_1$  to $t_{\text{spike}}^{(i_1)}$, generate a spike at this moment and update its postsynaptic neurons to $t_{\text{spike}}^{(i_1)}$. Preliminarily evolve the affected neurons from $t_{\text{spike}}^{(i_1)}$ to $t_0+\Delta t$ to update their spike times  $\{t_{\text{spike}}\}$.	
	}							 					 
	{ 		
		We accept $\{X_i{(T_{M+1}^{\text{sorted}})}\}$ as the solution  $\{X_i{(t+\Delta t)}\}$;\\ 	
	} 
\end{algorithm}

\section{Adaptive method}

When a neuron fires a spike, the HH neuron equations are stiff for
some milliseconds, denoted by $T^{\textrm{stiff}}$ as shown in Fig.
\ref{fig:Typical-fire-pattern}. This stiff period requires a sufficiently
small time step to avoid stability problem. In the regular RK4 scheme,
we use fixed time step, so to satisfy the requirement of stability,
we have to use a relatively small time step, $\mathit{e.g.}$, $\Delta t=1/32$
ms. But we may need to simulate the HH model frequently to analyze
the system's behavior with different parameters or evolve the model
for a long run time (hours) to obtain precise statistical properties.
Therefore, it is important to enlarge the time step as much as possible
to raise efficiency. 

\begin{figure}[h]
	\begin{centering}
		\includegraphics[width=1\textwidth]{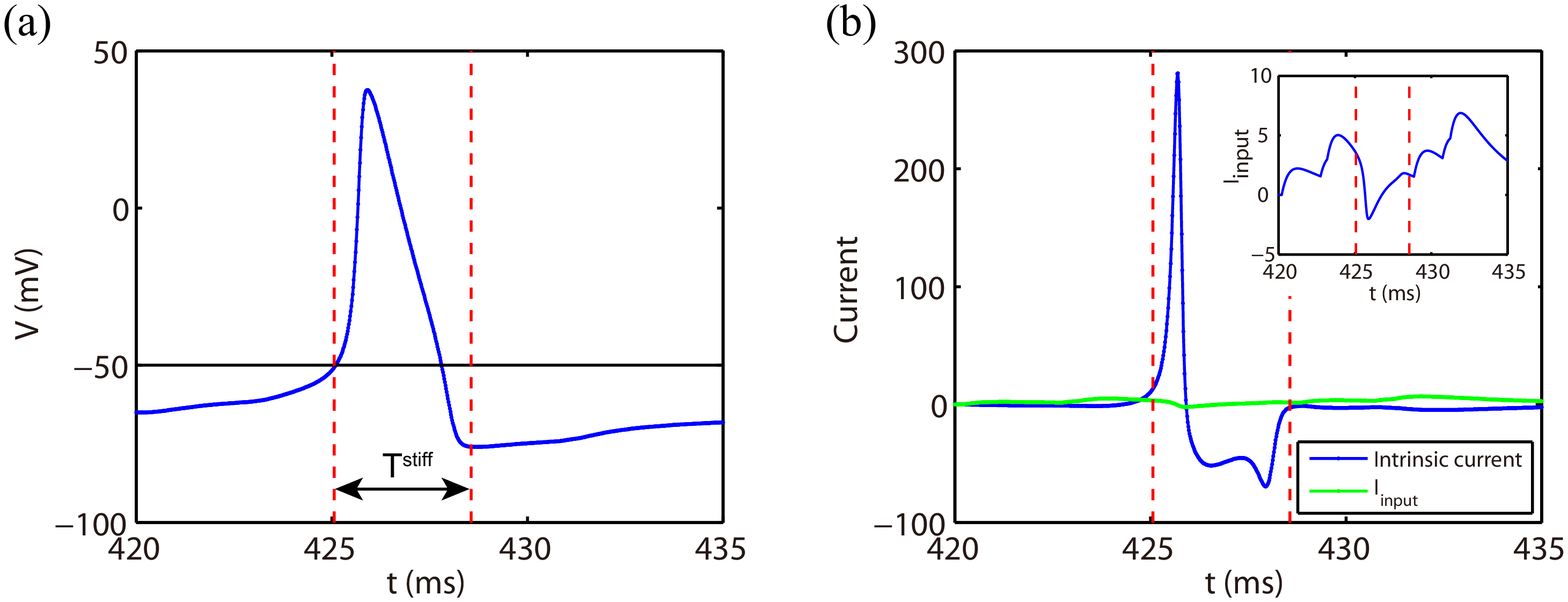}
		\par\end{centering}
	\caption{Typical firing event of a single HH neuron. (a) The trajectory of
		voltage $V$. The black line indicates the threshold $V^{\textrm{th}}$
		and the red dotted lines indicate the stiff period. (b) The trajectory
		of the intrinsic current and input current $I_{i}^{\textrm{input}}$
		($\mu\textrm{A\ensuremath{\cdot}cm}^{-2}$). The intrinsic current
		is the sum of ionic and leakage currents. \label{fig:Typical-fire-pattern}}
\end{figure}

We note that between spikes the neurons do not affect each other and
thus can be evolved independently. As shown in Algorithm 1, we induce
the spike-spike correction to obtain a high accurate method of fourth-order,
$i.e.$, we should split up the time step once a presynaptic neuron
fires. Therefore, the neurons are indeed evolved independently in
simulation. With this observation, we can design efficient method
by reasonably treating the neurons inside the stiff period.

We first introduce our adaptive method. Note that the derivative of
voltage is quite small outside the stiff period as shown in Fig. \ref{fig:Typical-fire-pattern}(b).
Then strategy is that we take a large time step $\Delta t$ to evolve
the network. Once a neuron is in the stiff period, the large time
step is then split up into small time steps $\Delta t_{S}$, $e.g.,$
with a value of $\Delta t_{S}=1/32$ ms in this Letter, as shown in
Fig. \ref{fig:Illustration-adaptive}. For each time step, we still
use the standard RK4 scheme to evolve the neurons, so the adaptive
method has an accuracy of $O(\Delta t^{4})$. Detailed adaptive algorithm
is the same as the Algorithm 1 except that the step 3 in Algorithm
1 should be replaced by the following algorithm. 

\begin{figure}[h]
	\begin{centering}
		\includegraphics[width=0.5\textwidth]{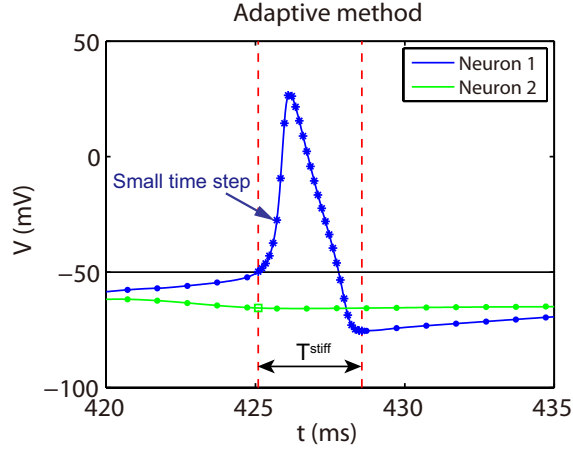}
		\par\end{centering}
	\caption{Illustration of the adaptive method for two HH neurons unidirectionally
		connected from neuron 1 to neuron 2. After neuron 1 fires a spike,
		it is evolved with a small time step $\Delta t_{S}$ during the stiff
		period indicated by the stars. The dots indicate the time nodes for
		the large time step. The green square indicates an update time node
		due to the synaptic spike from neuron 1. \label{fig:Illustration-adaptive}}
	
\end{figure}

\begin{algorithm}[H] 	
	\caption{Adaptive algorithm}	
	\eIf{The $i$th neuron is outside the stiff period}
	{
		Let $M$ denote the total number of feedforward spikes of the $i$th neuron within $[t_0, t_0+\Delta t]$.
	}
	{
		Let $M$ denote the total number of feedforward spikes of the $i$th neuron within $[t_0, t_0+\Delta t]$ and the extra time nodes $t_0+\Delta t_{S},t_0+2\Delta t_{S},...,t_0+k\Delta t_{S}$,
		$k=[\frac{\Delta t}{\Delta t_{S}}]$ ($[\cdot]$ takes the round-off
		number).  
	}	
	Sort them into an increasing list $\{T_m^{\text{sorted}}\}$ and extend this notation such that $T_0^{\text{sorted}} := t_0$ and $T_{M+1}^{\text{sorted}} := t_0+\Delta t$.\\		
\end{algorithm}

\section{Exponential time differencing method}

We now introduce the exponential time differencing method (ETD) which
is proposed for stiff systems \cite{cox2002exponential,borgers2013exponential,petropoulos1997analysis}.
To describe the ETD methods for HH networks, it is more instructive
to first consider a simple ordinary differential equation

\begin{equation}
\frac{du}{dt}=F(t,u)
\end{equation}
where $F(t,u)$ represents the stiff nonlinear forcing term. For a
single time step from $t=t_{k}$ to $t=t_{k+1}=t_{k}+\Delta t$, we
rewrite the equation 
\begin{equation}
\frac{du}{dt}=au+b+(F(t,u)-au-b)\label{eq:ETD_2}
\end{equation}
where $au+b$ is the dominant linear approximation of $F(t,u)$, and
$F(t,u)-au-b$ is the residual error. By multiplying Eq. (\ref{eq:ETD_2})
through by the integrating factor $e^{-at}$, we can obtain

\begin{equation}
u(t_{k+1})=u(t_{k})e^{a\Delta t}+e^{a\Delta t}\int_{0}^{\Delta t}e^{-a\tau}\tilde{F}(t_{k}+\tau,u(t_{k}+\tau))d\tau.\label{eq:ETD_integral}
\end{equation}
where $\tilde{F}(t,u)=F(t,u)-au$. This formula is exact, and the
essence of the ETD method is to find a proper way to approximate the
integral in this expression, $e.g.,$ approximating $\tilde{F}$ by
a polynomial. 

Here we give an ETD method with Runge-Kutta fourth-order time stepping
(ETD4RK). We write $u_{k}$ and $\tilde{F}_{k}$ for the numerical
approximations for $u(t_{k})$ and $\tilde{F}(t_{k},u_{k})$, respectively
(We use similar writing in the derivation of ETD4RK method for HH
equations). The ETD4RK method is given by

\noindent 
\begin{align}
a_{k} & =u_{k}e^{a\Delta t/2}+\tilde{F}_{k}\frac{e^{a\Delta t/2}-1}{a}\\
b_{k} & =u_{k}e^{a\Delta t/2}+\tilde{F}(t_{k}+\Delta t/2,a_{k})\frac{e^{a\Delta t/2}-1}{a}\\
c_{k} & =a_{k}e^{a\Delta t/2}+(2\tilde{F}(t_{k}+\Delta t/2,b_{k})-\tilde{F}_{k})\frac{e^{a\Delta t/2}-1}{a}\\
u_{k+1} & =u_{k}e^{a\Delta t}+g_{0}\tilde{F}_{k}+g_{1}[\tilde{F}(t_{k}+\Delta t/2,a_{k})+\tilde{F}(t_{k}+\Delta t/2,b_{k})]+g_{2}\tilde{F}(t_{k}+\Delta t,c_{k})
\end{align}
where 
\begin{alignat}{1}
g_{0} & =[-4-a\Delta t+e^{a\Delta t}(4-3a\Delta t+a^{2}\Delta t^{2})]/a^{3}\Delta t^{2}\nonumber \\
g_{1} & =2[2+a\Delta t+e^{a\Delta t}(-2+a\Delta t)]/a^{3}\Delta t^{2}\label{eq:g}\\
g_{2} & =[-4-3a\Delta t-a^{2}\Delta t^{2}+e^{a\Delta t}(4-a\Delta t)]/a^{3}\Delta t^{2}\nonumber 
\end{alignat}

The above formula requires an appropriate linear approximation $au+b$
for $F(t,u)$ during the single time step, so that the residual error
is relatively small and we can use a large time step to raise efficiency.
Note that the variables $V,m,h,$ and $n$ appear linearly in the
HH equations (\ref{eq: V of HH}, \ref{eq:mhn of HH}). Therefore,
we can directly apply ETD4RK method to HH system. For the easy of
writing, we give the standard ETD4RK method for a single HH neuron
to evolve over a single time step from $t=t_{k}$ to $t=t_{k+1}=t_{k}+\Delta t$.
The HH equations (\ref{eq: V of HH}, \ref{eq:mhn of HH}) are rewritten
in the form 
\begin{equation}
\frac{dz}{dt}=a_{z}z+\tilde{F}_{z}(t,V,m,h,n)\text{ for }z=V,m,h\text{ and }n
\end{equation}
where 

\begin{equation}
a_{V}=(-G_{Na}m_{k}^{3}h_{k}-G_{K}n_{k}^{4}-G_{L})/C
\end{equation}

\begin{equation}
a_{z}=-\alpha_{z}(V_{k})-\beta_{z}(V_{k})\text{ for }z=m,h\text{ and }n
\end{equation}
and
\begin{equation}
\tilde{F}_{V}=[-(V-V_{Na})G_{Na}m^{3}h-(V-V_{K})G_{K}n^{4}-(V-V_{L})G_{L}+I^{\textrm{input}}-a_{V}V]/C
\end{equation}

\begin{equation}
\tilde{F}_{z}=(1-z)\alpha_{z}(V)-z\beta_{z}(V)-a_{z}z\text{ for }z=m,h\text{ and }n.
\end{equation}
The standard ETD4RK method is given by

\begin{align}
a_{z,k} & =z_{k}e^{a_{z}\Delta t/2}+\tilde{F}_{z,k}\frac{e^{a_{z}\Delta t/2}-1}{a_{z}}\\
b_{z,k} & =z_{k}e^{a_{z}\Delta t/2}+\tilde{F_{z}}(t_{k}+\Delta t/2,a_{V,k},a_{m,k},a_{h,k},a_{n,k})\frac{e^{a_{z}\Delta t/2}-1}{a_{z}}\\
c_{z,k} & =a_{z,k}e^{a_{z}\Delta t/2}+(2\tilde{F}_{z}(t_{k}+\Delta t/2,b_{V,k},b_{m,k},b_{h,k},b_{n,k})-\tilde{F}_{z,k})\frac{e^{a_{z}\Delta t/2}-1}{a_{z}}
\end{align}
\begin{equation}
\begin{aligned}z_{k+1} & =z_{k}e^{a_{z}\Delta t}+g_{z,0}\tilde{F}_{z,k}+g_{z,1}\tilde{F_{z}}(t_{k}+\Delta t/2,a_{V,k},a_{m,k},a_{h,k},a_{n,k})\\
& +g_{z,1}\tilde{F}_{z}(t_{k}+\Delta t/2,b_{V,k},b_{m,k},b_{h,k},b_{n,k})+g_{z,2}\tilde{F}_{z}(t_{k}+\Delta t,c_{V,k},c_{m,k},c_{h,k},c_{n,k})
\end{aligned}
\end{equation}
for $z=V,m,h$ and $n$. The forms of $g_{z,0},g_{z,1},g_{z,2}$ are
the same as given in Eq. (\ref{eq:g}) except that the term $a$ should
be replaced by $a_{z}$.

\begin{figure}[H]
	\begin{centering}
		\includegraphics[width=0.5\textwidth]{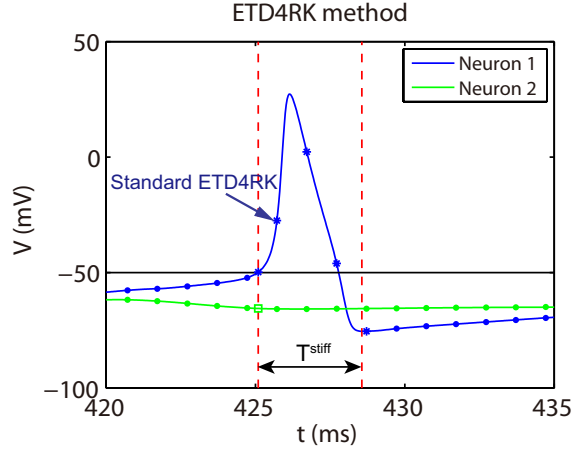}
		\par\end{centering}
	\caption{Illustration of the ETD4RK method for two HH neurons unidirectionally
		connected from neuron 1 to neuron 2. After neuron 1 fires a spike,
		we use the standard ETD4RK scheme to evolve it during the stiff period
		indicated by the stars. The dots indicate the time nodes where we
		use the standard RK4 scheme. The green square indicates an update
		time node due to the synaptic spike from neuron 1. \label{fig:Illustration-ETD4RK}}
	
\end{figure}

Compared with the standard RK4 scheme, the standard ETD4RK scheme
requires extra calculation for the linear approximation terms in Eqs.
(\ref{eq: V of HH}, \ref{eq:mhn of HH}). Besides the HH equations
are stiff only during an action potential as shown in Fig. \ref{fig:Typical-fire-pattern}.
Therefore, it is more suitable to design the ETD4RK method for HH
networks as following: If a neuron is inside the stiff period, we
use the standard ETD4RK scheme to evolve its HH equations, otherwise,
we still use the standard RK4 scheme, as shown in Fig. \ref{fig:Illustration-ETD4RK}.
Detailed ETD4RK algorithm is also based on the Algorithm 1 except
that the step 5 should be replaced by the following algorithm.

\begin{algorithm}[H] 	
	\caption{ETD4RK algorithm}
	\eIf{The $i$th neuron is outside the stiff period}
	{
		Advance the equations for the $i$th HH neuron from  $T_{m-1}^{\text{sorted}}$ to $T_m^{\text{sorted}}$ using the standard RK4 scheme. 
	}
	{
		Advance the equations for the $i$th HH neuron from  $T_{m-1}^{\text{sorted}}$ to $T_m^{\text{sorted}}$ using the standard ETD4RK scheme.  
	}	
	Then update the conductance $H_i(T_m^{\text{sorted}})$ by adding $f$.	
\end{algorithm}

\section{Library method \label{sec:Library method}}

The introduced adaptive and ETD4RK methods can evolve the HH neuronal
networks quite accurately, $e.g.$, they both can capture precise
spike shapes. The library method \cite{sun2009library} sacrifices
the accuracy of the spike shape by treating the HH neuron's firing
event like an I\&F neuron. Once a neuron's membrane potential reaches
the threshold $V^{\textrm{th}}$, we stop evolving its $V,m,h,n$
for the following stiff period $T^{\textrm{stiff}}$, and restart
with the values interpolated from a pre-computed high resolution library.
Thus the library method can achieve the highest efficiency in principle
by skipping the stiff period. Besides, it can still capture accurate
statistical properties of the HH neuronal networks like the mean firing
rates and chaotic dynamics. 

\subsection{Build the library}

We first describe how to build the library in detail. When a neuron's
membrane potential reaches the threshold $V^{\textrm{th}}$, we record
the values $I^{\textrm{input}},m,h,n$ and denote them by $I^{\textrm{th}},m^{\textrm{th}},h^{\textrm{th}},n^{\textrm{th}}$,
respectively. If we know the exact trajectory of $I^{\textrm{input}}$
for the following stiff period $T^{\textrm{stiff}}$, we can use a
sufficiently small time step to evolve the Eqs. (\ref{eq: V of HH},
\ref{eq:mhn of HH}) for $T^{\textrm{stiff}}$ with initial values
$V^{\textrm{th}},m^{\textrm{th}},h^{\textrm{th}},n^{\textrm{th}}$
to obtain high resolution trajectories of $V,m,h,n$. The values at
the end of the stiff period are like the reset values in I\&F neurons
and are denoted by $V^{\textrm{re}},m^{\textrm{re}},h^{\textrm{re}},n^{\textrm{re}}$.

However it is impossible to obtain the exact trajectory of $I^{\textrm{input}}$
without knowing the feedforward and synaptic spike information. As
shown in Fig. \ref{fig:Typical-fire-pattern}(b), $I^{\textrm{input}}$
varies during the stiff period with peak value in the range of $O(5)$
$\mu\textrm{A\ensuremath{\cdot}cm}^{-2}$, while the intrinsic current,
the sum of ionic and leakage current, is about 30 $\mu\textrm{A\ensuremath{\cdot}cm}^{-2}$
at the spike time, and quickly rises to the peak value about 250 $\mu\textrm{A\ensuremath{\cdot}cm}^{-2}$,
then stays at $O(-50)$ $\mu\textrm{A\ensuremath{\cdot}cm}^{-2}$
in the remaining stiff period. Therefore, the intrinsic current is
dominant in the stiff period. With this observation, we take $I^{\textrm{input}}$
as constant throughout the whole stiff period when we build the library.
We emphasize that this is the only assumption made in the library
method. Then, given a suite of $I^{\textrm{th}},m^{\textrm{th}},h^{\textrm{th}},n^{\textrm{th}}$,
we can obtain the corresponding suite of $V^{\textrm{re}},m^{\textrm{re}},h^{\textrm{re}},n^{\textrm{re}}$. 

\begin{figure}[H]
	\begin{centering}
		\includegraphics[width=0.5\textwidth]{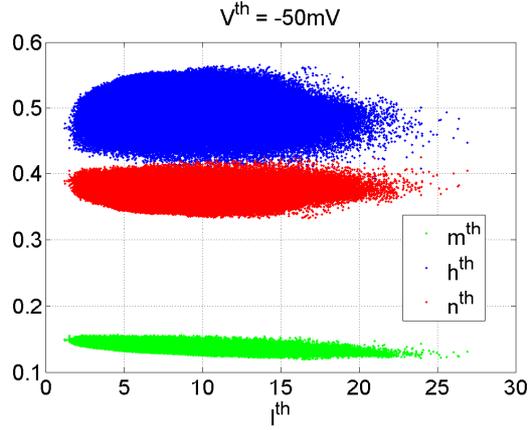}
		\par\end{centering}
	\caption{The ranges of values $I^{\textrm{th}},m^{\textrm{th}},h^{\textrm{th}},n^{\textrm{th}}$.
		\label{fig:Imhn range} }
\end{figure}

When building the library, we choose $N_{I},N_{m},N_{h},N_{n}$ different
values of $I^{\textrm{th}},m^{\textrm{th}},h^{\textrm{th}},n^{\textrm{th}}$,
respectively, equally distributed in their ranges as shown in Fig.
\ref{fig:Imhn range}. For each suite of $I^{\textrm{th}},m^{\textrm{th}},h^{\textrm{th}},n^{\textrm{th}}$,
we evolve the Eqs. (\ref{eq: V of HH}, \ref{eq:mhn of HH}) for a
time interval of $T^{\textrm{stiff}}$ to obtain $V^{\textrm{re}},m^{\textrm{re}},h^{\textrm{re}},n^{\textrm{re}}$
with RK4 method with a sufficiently small time step, $\mathit{e.g.}$,
$\Delta t=2^{-16}$ ms. Note that we should take $I^{\textrm{input}}=I^{\textrm{th}}$
throughout the whole time interval $T^{\textrm{stiff}}$. The library
is built with a total size of $8N_{I}N_{m}N_{h}N_{n}$. In our simulation,
we take the ranges $[0,50]$ $\mu\textrm{A\ensuremath{\cdot}cm}^{-2}$,
$[0,0.3]$, $[0.2,0.6]$ and $[0.3,0.6]$ for $I^{\textrm{th}},m^{\textrm{th}},h^{\textrm{th}},n^{\textrm{th}}$
respectively that can cover almost all possible situations and take
sample numbers $N_{I}=21,N_{m}=16,N_{h}=21,N_{n}=16$ which can make
the library quite accurate. The library occupies only 6.89 megabyte
in binary form and is quite small for today's computers. 

One key point in building the library is to choose a proper threshold
value $V^{\textrm{th}}$. The threshold should be relatively low to
keep the HH equations not stiff and allow a large time step with the
stability requirement satisfied. On the other hand, it should be relatively
high that a neuron will definitely fire a spike after its membrane
potential reaches the threshold. In this Letter, we take $V^{\textrm{th}}=-50$
mV. Correspondingly, we take the stiff period $T^{\textrm{stiff}}=3.5$
ms which is long enough to cover the stiff parts in general firing
events. 

\subsection{Use the library}

\begin{figure}[H]
	\begin{centering}
		\includegraphics[width=0.5\textwidth]{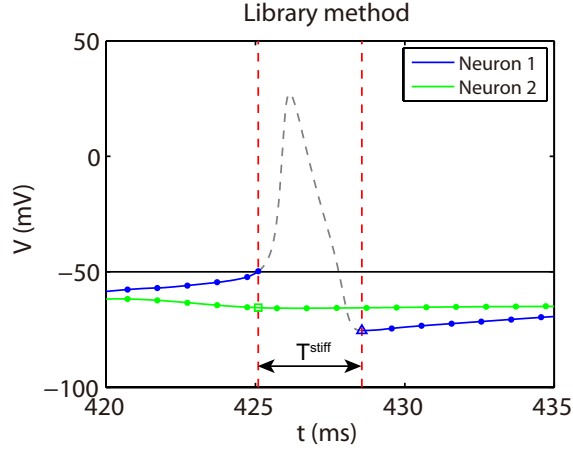}
		\par\end{centering}
	\caption{Illustration of the library method for two HH neurons unidirectionally
		connected from neuron 1 to neuron 2. After neuron 1 fires a spike,
		we compute and record the values $I^{\textrm{th}},m^{\textrm{th}},h^{\textrm{th}},n^{\textrm{th}}$,
		then we stop evolving its $V,m,h,n$ for the following $T^{\textrm{stiff}}$
		ms and restart with the values $V^{\textrm{re}},m^{\textrm{re}},h^{\textrm{re}},n^{\textrm{re}}$
		interpolated from the library indicated by the triangle. The dots
		indicate the time nodes where we use the standard RK4 scheme. The
		green square indicates an update time node due to the synaptic spike
		from neuron 1. \label{fig:Illustration-Lib}}
	
\end{figure}

We now illustrate how to use the library. Once a neuron's membrane
potential reaches the threshold, we first record the values $I^{\textrm{th}},m^{\textrm{th}},h^{\textrm{th}},n^{\textrm{th}}$,
then stop evolving this neuron's equations of $V,m,h,n$ for the following
$T^{\textrm{stiff}}$ ms and restart with values $V^{\textrm{re}},m^{\textrm{re}},h^{\textrm{re}},n^{\textrm{re}}$
linearly interpolated from the pre-computed high resolution data library
as shown in Fig. \ref{fig:Illustration-Lib}. For the easy of writing,
suppose $I^{\textrm{th}}$ falls between two data points $I_{0}^{\textrm{th}}$
and $I_{1}^{\textrm{th}}$ in the library. Simultaneously we can find
the data points $m_{0}^{\textrm{th}}$ and $m_{1}^{\textrm{th}}$,
$h_{0}^{\textrm{th}}$ and $h_{1}^{\textrm{th}}$, $n_{0}^{\textrm{th}}$
and $n_{1}^{\textrm{th}}$, respectively. So we need 16 suites of
values in the library to perform a linear interpolation 

\begin{equation}
z^{\textrm{re}}=\sum_{i,j,k,l=0,1}z^{\textrm{re}}(I_{i}^{\textrm{th}},m_{j}^{\textrm{th}},h_{k}^{\textrm{th}},n_{l}^{\textrm{th}})\frac{I^{\textrm{th}}-I_{1-i}^{\textrm{th}}}{I_{i}^{\textrm{th}}-I_{1-i}^{\textrm{th}}}\frac{m^{\textrm{th}}-m_{1-j}^{\textrm{th}}}{m_{j}^{\textrm{th}}-m_{1-j}^{\textrm{th}}}\frac{h^{\textrm{th}}-h_{1-k}^{\textrm{th}}}{h_{k}^{\textrm{th}}-h_{1-k}^{\textrm{th}}}\frac{n^{\textrm{th}}-n_{1-l}^{\textrm{th}}}{n_{l}^{\textrm{th}}-n_{1-l}^{\textrm{th}}}\label{eq:Lib-V}
\end{equation}
for $z=V,m,h,n$. We note that the parameters $G$ and $H$ have analytic
solutions, so they can be evolved as usual during the stiff period.
After obtaining the high resolution library, we can use a large time
step to evolve the HH neuron network with standard RK4 method. Detailed
library algorithm is the same as the Algorithm 1 except that the step
13 should be replaced by the following algorithm. 

\begin{algorithm}[H] 	
	\caption{Library algorithm}		
	Find the minimum of $\{t_{\text{spike}}^{(i)}\}$, say $t_{\text{spike}}^{(i_1)}$.			 		
	Evolve neuron $i_1$  to $t_{\text{spike}}^{(i_1)}$,  and generate a spike at this moment.\\ 	
	Record the values $I^{\textbf{th}}, m^{\textbf{th}}, h^{\textbf{th}}, n^{\textbf{th}}$, then perform a linear interpolation from the library to get $V^{\textbf{re}}, m^{\textbf{re}}, h^{\textbf{re}}, n^{\textbf{re}}$. Meanwhile, we stop evolving $V, m, h , n$ of neuron $i_1$ for the next $T^{\textbf{stiff}}$ ms, but we still evolve the conductance parameters $G, H$ as usual.\\ 	
	Update its postsynaptic neurons to $t_{\text{spike}}^{(i_1)}$. Preliminarily evolve the affected neurons from $t_{\text{spike}}^{(i_1)}$ to $t_0+\Delta t$ to update their spike times  $\{t_{\text{spike}}\}$.	
\end{algorithm}

\subsection{Error of library method \label{subsec:Error-of-library}}

There are three kinds of error in the library method: the error from
the time step $\Delta t$, the error from the way of interpolation
to use the library and the error from the assumption that we keep
$I^{\textrm{input}}$ as constant throughout the stiff period $T^{\textrm{stiff}}$.
The first one is simply $O(\Delta t^{4})$ since the library method
is based on the RK4 scheme. The other two kinds of error come from
the call of library. For simplicity, we consider the error of voltage
for one single neuron to illustrate
\begin{equation}
\Delta V=|V_{\textrm{library}}-V_{\textrm{regular}}|
\end{equation}
where $|\cdot|$ takes the absolute value, the subscript ``regular''
indicates the high precision solution computed by the RK4 method for
a sufficiently small time step $\Delta t=2^{-16}$ ms and ``library''
indicates the solution from the library method.

The error from the way of interpolation can be well described by the
error of the reset value $\Delta V^{\textrm{re}}$. We use a constant
input $I^{\textrm{input}}$ to drive the HH neuron to exclude the
influence of the assumption of constant input. Denote the sample intervals
$\Delta I,\Delta m,\Delta h,\Delta n$ for $I^{\textrm{th}},m^{\textrm{th}},h^{\textrm{th}},n^{\textrm{th}}$,
respectively. A linear interpolation yields an error of $\Delta V^{\textrm{re}}=O(\Delta I^{2}+\Delta m^{2}+\Delta h^{2}+\Delta n^{2})$.
In our simulation, we take $\Delta I=2.5$ $\mu\textrm{A\ensuremath{\cdot}cm}^{-2}$,
$\Delta m=\Delta h=\Delta n=0.02$ to build the library. If we use
sample intervals $c\Delta I,c\Delta m,c\Delta h,c\Delta n$, then
it is straight forward that $\Delta V^{\textrm{re}}=O(c^{2})$ as
shown in Fig. \ref{fig:LibError_interpolation}(a) (Same results hold
for $m,h,n$). In the same way, a cubic interpolation yields $\Delta V^{\textrm{re}}=O(c^{4})$
as shown in Fig. \ref{fig:LibError_interpolation}(b). Thus the cubic
interpolation can indeed improve the accuracy when the network is
driven by a constant input as shown in Fig. \ref{fig:LibError_interpolation}(c).

\begin{figure}[H]
	\centering{}\includegraphics[width=1\textwidth]{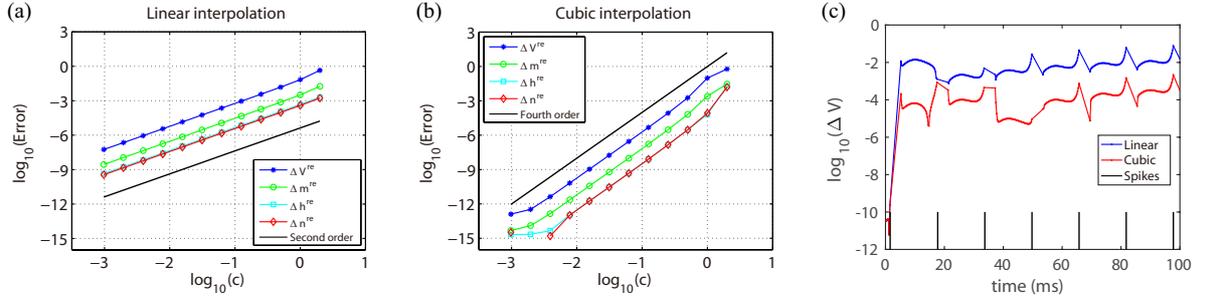}\caption{(a, b): The error of the reset values for a single HH neuron driven
		by constant input. (a) A linear interpolation and (b) a cubic interpolation
		to use the library. (c) The error of $\Delta V$ for a single HH neuron
		driven by constant input with a linear (blue) and cubic (red) interpolation
		to use the library. The short black lines indicate a spike. Time step
		is $\Delta t=2^{-16}$ ms for (a-c).\label{fig:LibError_interpolation}}
\end{figure}

We now consider the error due to the assumption of constant input.
Driven a single HH neuron by Poisson input, we compare the error of
$\Delta V$ with different time steps and ways of interpolation to
use the library as shown in Fig. \ref{fig:LibError_dV}. We find that
once the neuron fires a spike and calls the library, the reset error
$\Delta V^{\textrm{re}}$ (the dotted line in Fig. \ref{fig:LibError_dV})
is significantly greatly than the error of $\Delta V$ from the time
step and way of interpolation. Therefore, the biggest error is from
the assumption of constant input which cannot be reduced. Hence, we
build a relatively coarse library and choose the linear interpolation
to use the library in the Letter. Besides, we also find that the error
of $\Delta V$ will not accumulate linearly with call number. As shown
in Fig. \ref{fig:LibError_dV}, when the neuron fires, the error of
$\Delta V$ will first raise to the level of $\Delta V^{\textrm{re}}$
and then quickly decay until the next spike time. 

\begin{figure}[H]
	\begin{centering}
		\includegraphics[width=1\textwidth]{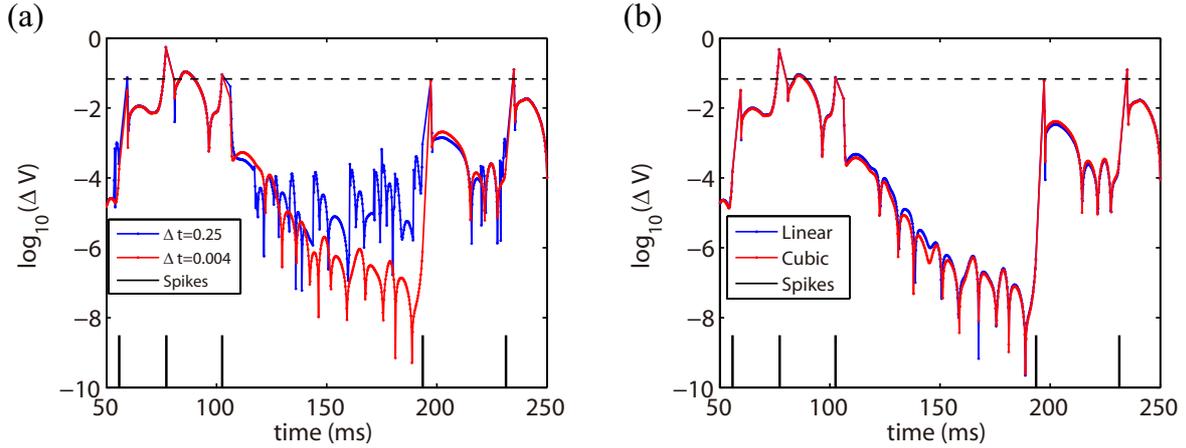}
		\par\end{centering}
	\caption{(a) The error of $\Delta V$ for one single neuron driven by Poisson
		input with a linear interpolation to use the library. Time steps are
		$\Delta t=0.25$ ms (blue) and $\Delta t=0.004$ ms (red). (b) The
		error of $\Delta V$ for one single neuron driven by Poisson input
		with a linear (blue) and cubic (red) interpolation to use the library,
		$\Delta t=0.004$ ms. The dotted line indicates the error of the reset
		value $\Delta V^{\textrm{re}}$ and the short black lines indicate
		a spike. \label{fig:LibError_dV}}
\end{figure}

\section{Numerical results \label{sec: Numerical result}}

\subsection{Hopf bifurcation of individual HH neuron \label{subsec:Hopf}}

In this section, we show the performance of the given adaptive, ETD4RK
and library methods with large time steps. We first exam whether the
given methods can capture the type II behavior of individual HH neuron.
Driven by constant input, the neuron can fire regularly and periodically
\cite{gerstner2002spiking,koch1998methods} only when the input current
greater than a critical value $I^{\textrm{input}}\approx6.2$ $\mu\textrm{A\ensuremath{\cdot}cm}^{-2}$.
The HH model has a sudden jump around this critical value from zero
firing rate to regular nonzero firing rate because of a subcritical
Hopf bifurcation \cite{koch1998methods}, as shown in Figure \ref{fig:Hopf}(a).
Below the critical value, some spikes may appear before the neuron
converges to stable zero firing rate state. The number of spikes during
this transient period depends on how close the constant input is to
the critical value, as shown in Figure \ref{fig:Hopf}(b). We point
out that all the adaptive, ETD4RK and library method with time steps
8 time greater than that of the regular method can capture the type
II behavior. Especially, the original library method in Ref. \cite{sun2009library}
fails to capture the spikes in the transient period since they use
stable information of $V,m,h,$ and $n$ to build the library. Our
library method uses the more intuitive suit of $V,m,h$ and $n$ to
build the library which already covers both the possible stable and
transient cases and is thus more precise.

\begin{figure}[h]
	\begin{centering}
		\includegraphics[width=1\textwidth]{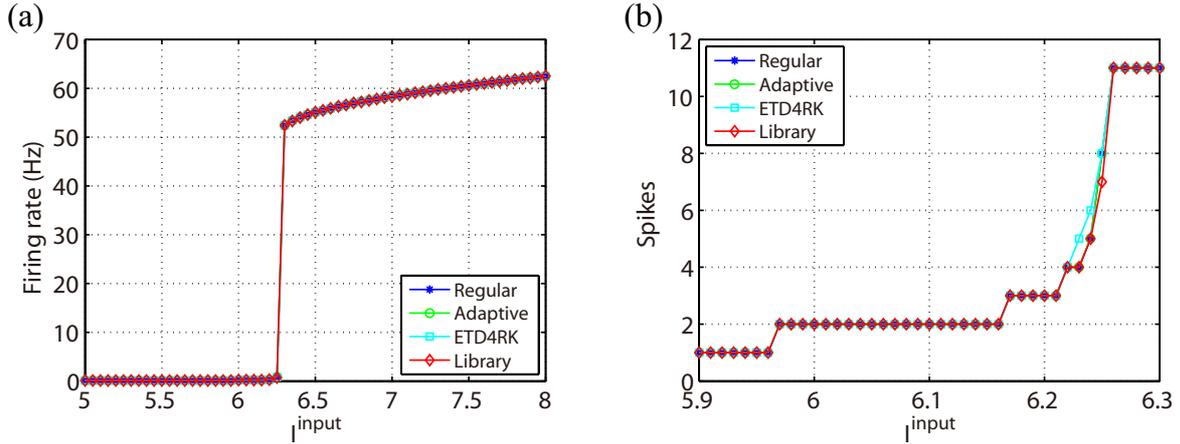}
		\par\end{centering}
	\caption{(a) The firing rate as a function of constant input $I^{\textrm{input}}$
		($\mu\textrm{A\ensuremath{\cdot}cm}^{-2}$). (b) The number of spikes
		during the transient period with initial voltage $V=-65$ mV. The
		blue stars, green circles, cyan squares and red diamonds in (a) and
		(b) indicate the results using regular method with $\Delta t=1/32$
		ms, adaptive, ETD4RK and library method with $\Delta t=1/4$ ms, respectively.
		\label{fig:Hopf}}
\end{figure}

\subsection{Lyapunov exponent\label{subsec:Lyapunov-exponent}}

For the performance of a network, we mainly consider a homogeneously
and randomly connected network of 100 excitatory neurons with connection
probability 10\%, feedforward Poisson input $\nu=100$ Hz and $f=0.1$
$\textrm{mS\ensuremath{\cdot}cm}^{-2}$. Then the coupling strength
$S$ is the only remaining variable. Other types of HH network and
other dynamic regimes can be easily extended and similar results can
be obtained. 

We first study the chaotic dynamical property of the HH system by
computing the largest Lyapunov exponent which is one of the most important
tools to characterize chaotic dynamics \cite{oseledec1968multiplicative}.
The spectrum of Lyapunov exponents can measure the average rate of
divergence or convergence of the reference and the initially perturbed
orbits \cite{ott2002chaos,thompson2002nonlinear,parker2012practical}.
If the largest Lyapunov exponent is positive, then the reference and
perturbed orbits will exponentially diverge and the dynamics is chaotic,
otherwise, they will exponentially converge and the dynamics is non-chaotic. 

When calculating the largest Lyapunov exponent, we use $\mathbf{X}=[X_{1},X_{2},...,X_{N}]$
to represent all the variables of the neurons in the HH model. Denote
the reference and perturbed trajectories by $\mathbf{X}(t)$ and $\mathbf{\tilde{X}}(t)$,
respectively. The largest Lyapunov exponent can be computed by

\begin{equation}
\lim_{t\rightarrow\infty}\lim_{\epsilon\rightarrow0}\frac{1}{t}\ln\frac{||\mathbf{\tilde{X}}(t)-\mathbf{X}(t)||}{||\epsilon||}\label{eq:LE}
\end{equation}
where $\epsilon$ is the initial separation. However we cannot use
Eq. (\ref{eq:LE}) to compute directly, because for a chaotic system
the separation $||\mathbf{\tilde{X}}(t)-\mathbf{X}(t)||$ is unbounded
as $t\rightarrow\infty$ and a numerical ill-condition will happen.
The standard algorithm to compute the largest Lyapunov exponent can
be found in Ref. \cite{parker2012practical,zhou2010spectrum,wolf1985determining}.
The regular, ETD4RK and adaptive methods can use these algorithms
directly. However, for the library method, the information of $V,m,h,n$
are blank during the stiff period and these algorithms do not work.
We use the extended algorithm introduced in Ref. \cite{zhou2009network}
to solve this problem.

As shown in Fig. \ref{fig:MLE}(a), we compute the largest Lyapunov
exponent as a function of coupling strength $S$ from 0 to 0.1 $\textrm{mS\ensuremath{\cdot}cm}^{-2}$
by the regular, adaptive, ETD4RK and library methods, respectively.
The total run time $T$ is 60 seconds which is sufficiently long to
have convergent results. The adaptive, ETD4RK and library methods
with large time steps ($\Delta t=1/4$ ms) can all obtain accurate
largest Lyapunov exponent compared with the regular method. It shows
three typical dynamical regimes that the system is chaotic in $0.037\lesssim S\lesssim0.068$
$\textrm{mS\ensuremath{\cdot}cm}^{-2}$ with positive largest Lyapunov
exponent and the system is non-chaotic in $0\lesssim S\lesssim0.037$
and $0.068\lesssim S\lesssim0.1$ $\textrm{mS\ensuremath{\cdot}cm}^{-2}$. 

As shown in Fig. \ref{fig:MLE}(b), we compute the mean firing rate,
denoted by $R$, obtained by these methods to further demonstrate
how accurate the adaptive, ETD4RK and library methods are. We give
the relative error in the mean firing rate, which is defined as

\begin{equation}
E_{\textrm{R}}=|R_{*}-R_{\textrm{regular}}|/R_{\textrm{regular}}
\end{equation}
where $*=\textrm{adaptive, ETD4RK, library}$. As shown in Fig. \ref{fig:MLE}(c),
all the given three methods can achieve at least 2 digits of accuracy
using large time steps ($\Delta t=1/4$ ms). Note that the relative
error may be zero in some cases and we set the logarithmic relative
error -8 in Fig. \ref{fig:MLE}(c), if it happens. Therefore, the
adaptive method has a bit advantage over the ETD4RK and library methods
for the mean firing rate. 

\begin{figure}[H]
	\begin{centering}
		\includegraphics[width=1\textwidth]{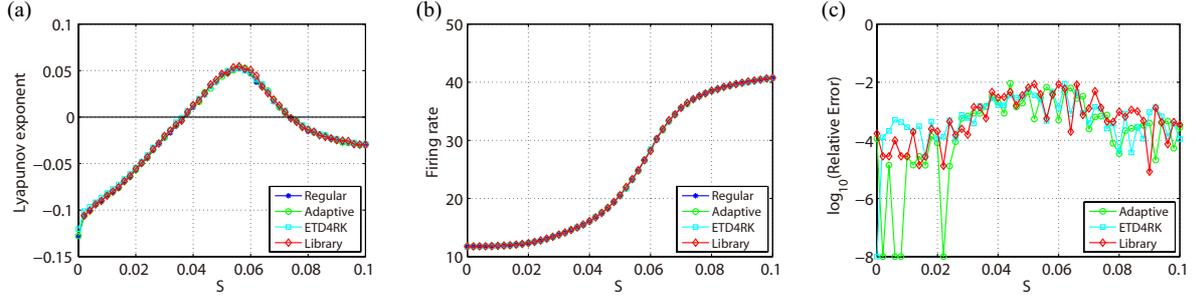}
		\par\end{centering}
	\caption{(a) The largest Lyapunov exponent of the HH network versus the coupling
		strength $S$. (b) Mean firing rate versus the coupling strength $S$.
		(c) The relative error in the mean firing rate (the Y-coordinate value
		-8 indicates there is no error). The blue, green, cyan and red curves
		represent the regular, adaptive, ETD4RK and library methods, respectively.
		The time steps are used as $\Delta t=1/32$ ms in the regular method
		and $\Delta t=1/4$ ms in the adaptive, ETD4RK and library methods.
		The total run time is 60 seconds to obtain convergent results. \label{fig:MLE}}
\end{figure}

From the calculation of the largest Lyapunov exponent, we have known
that there are three typical dynamical regimes in the HH model. As
shown in Fig. \ref{fig:Raster_258}, these three regimes are asynchronous
regime in $0\lesssim S\lesssim0.037$ $\textrm{mS\ensuremath{\cdot}cm}^{-2}$,
chaotic regime in $0.037\lesssim S\lesssim0.068$ $\textrm{mS\ensuremath{\cdot}cm}^{-2}$
and synchronous regime in $0.068\lesssim S\lesssim0.1$ $\textrm{mS\ensuremath{\cdot}cm}^{-2}$.
So we choose three coupling strength $S=0.02,0.05$ and $0.08$ $\textrm{mS\ensuremath{\cdot}cm}^{-2}$
to represent these three typical dynamical regimes respectively in
the following numerical tests.

\begin{figure}[H]
	\centering{}\includegraphics[width=1\textwidth]{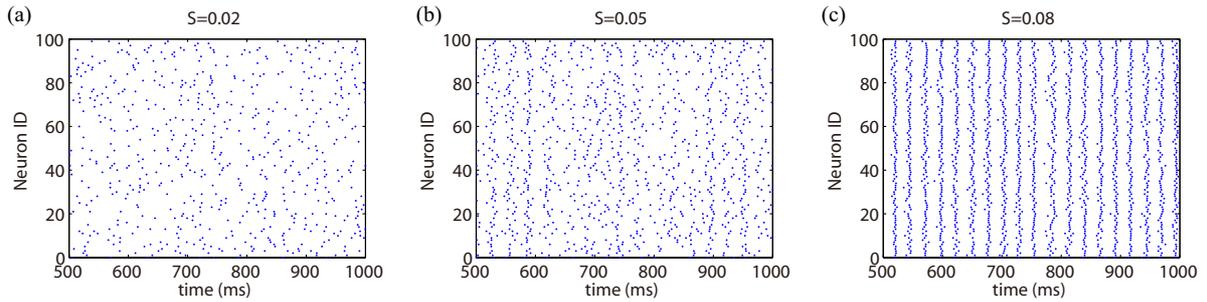}
	\caption{Raster plots of firing events in three typical dynamical regimes with
		coupling strength (a) $S=0.02$ $\textrm{mS\ensuremath{\cdot}cm}^{-2}$,
		(b) $S=0.05$ $\textrm{mS\ensuremath{\cdot}cm}^{-2}$, (c) $S=0.08$
		$\textrm{mS\ensuremath{\cdot}cm}^{-2}$. Since all the given methods
		have almost the same raster, we only show the result obtained by the
		regular method. \label{fig:Raster_258}}
\end{figure}

\subsection{Convergence tests}

We now verify whether the algorithms given above have a fourth-order
accuracy by performing convergence tests. For each method, we use
a sufficiently small time step $\Delta t=2^{-16}$ ms ($\Delta t_{S}=2^{-16}$
ms for adaptive method) to evolve the HH model to obtain a high precision
solution $\mathbf{X}^{\textrm{high}}$ at time $t=2000$ ms. To perform
a convergence test, we also compute the solution $\mathbf{X}^{(\Delta t)}$
using time steps $\Delta t=2^{-4},2^{-5},...,2^{-10}$ ms. The numerical
error is measured in the $L^{2}$ -norm

\begin{equation}
E=||\mathbf{X}^{(\Delta t)}-\mathbf{X}^{\textrm{high}}||
\end{equation}
As shown in Fig. \ref{fig:convergence test}(a), the regular method
can achieve a fourth-order accuracy for the non-chaotic regimes $S=0.02$
and $0.08$ $\textrm{mS\ensuremath{\cdot}cm}^{-2}$. For the chaotic
one $S=0.05$ $\textrm{mS\ensuremath{\cdot}cm}^{-2}$, we can never
achieve convergence of the numerical solution. The adaptive, ETD4RK
and library methods have similar convergence phenomena as shown in
Fig. \ref{fig:convergence test}. 

\begin{figure}[H]
	\begin{centering}
		\includegraphics[width=1\textwidth]{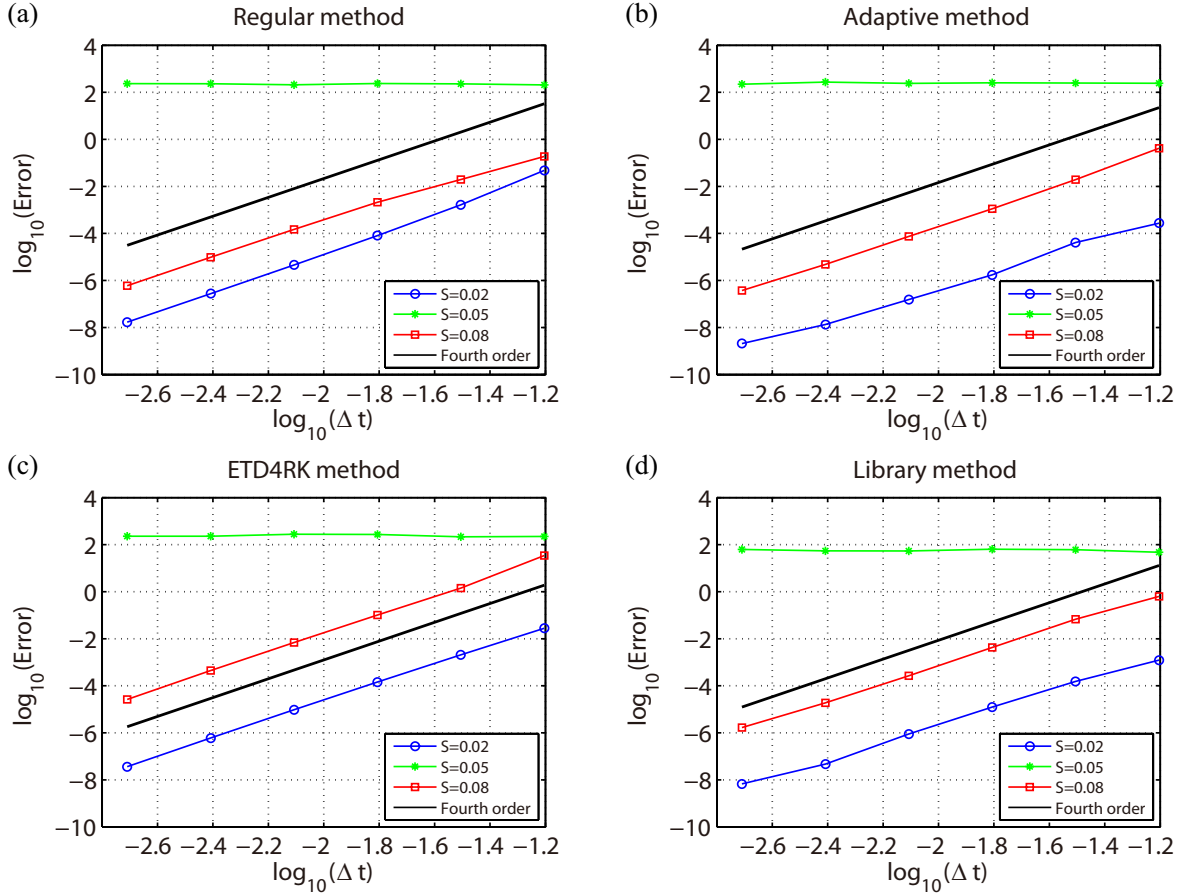}
		\par\end{centering}
	\caption{Convergence tests of (a) regular method, (b) adaptive method, (c)
		ETD4RK method and (d) library method. The convergence tests are performed
		by evolving the HH model for a total run time of $T=2000$ ms. We
		show the results for the coupling strength $S=0.02\textrm{ (blue)},0.05\textrm{ (green)}$
		and $0.08$ (red) $\textrm{mS\ensuremath{\cdot}cm}^{-2}$, respectively.
		\label{fig:convergence test}}
\end{figure}

\subsection{Computational efficiency }

In this section, we compare the efficiency among the adaptive, ETD4RK
and library methods. A straight forward way is to compare the time
cost of each method with a same total run time $T$. We notice that
all the methods are based on the standard RK4 scheme, so we can compare
the call number of the standard RK4 scheme that each method costs
to give an analytical comparison of efficiency. For the ETD4RK method,
the computational cost of one standard ETD4RK scheme is a bit more
than that of the strand RK4 scheme. Here we omit the difference and
no longer distinguish them for simplicity.

We start by exploring the call number of standard RK4 scheme for one
neuron for one single time step with the regular method, which is
presented in Theorem \ref{thm: call num RK4}. 
\begin{thm} \label{thm: call num RK4}
	Suppose the presynaptic spike train to
	a neuron can be modeled by a Poisson train with rate $\lambda$, then
	the call number of standard RK4 scheme for the neuron for one time
	step from $t_{0}$ to $t_{0}+\Delta t$ in the regular method is $1+\nu\Delta t+(2\Delta t+\nu\Delta t^{2})\lambda$
	on average.
\end{thm}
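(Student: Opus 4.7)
I would prove Theorem~\ref{thm: call num RK4} by decomposing the RK4 call count on the step into two independent contributions, the preliminary evolution and the spike-spike correction loop, and then taking expectations with respect to the two Poisson trains (feedforward with rate $\nu$ and presynaptic with rate $\lambda$).

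First, reading off from Algorithm~1, the preliminary evolution of the neuron on $[t_0,t_0+\Delta t]$ partitions the step at its $M$ feedforward arrival times and performs one standard RK4 update per subinterval, yielding $M+1$ calls. Since the feedforward train is Poisson with rate $\nu$, $M$ is Poisson$(\nu\Delta t)$, so the expected contribution is $\mathbb{E}[M+1]=1+\nu\Delta t$.

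Next I would turn to the correction loop. Each presynaptic arrival at some time $t^{\text{syn}}\in(t_0,t_0+\Delta t]$ inserts one additional split into the neuron's trajectory: the affected neuron is advanced from $t_0$ to $t^{\text{syn}}$ through the $M_1$ feedforward events preceding $t^{\text{syn}}$ (that is, $M_1+1$ calls), the synaptic impulse is applied, and the neuron is then preliminarily re-evolved from $t^{\text{syn}}$ to $t_0+\Delta t$ through the remaining $M_2=M-M_1$ feedforward events (that is, $M_2+1$ calls), giving $M+2$ RK4 calls per presynaptic event. Letting $K\sim\text{Poisson}(\lambda\Delta t)$ denote the number of presynaptic arrivals on the step, $K$ is independent of $M$ because the feedforward and presynaptic Poisson trains are generated by disjoint sources. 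Hence
\begin{equation*}
\mathbb{E}\bigl[K(M+2)\bigr]=\mathbb{E}[K]\,\mathbb{E}[M+2]=\lambda\Delta t\,(2+\nu\Delta t)=(2\Delta t+\nu\Delta t^2)\lambda,
\end{equation*}
and adding the preliminary contribution $1+\nu\Delta t$ yields the claimed average.

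The crucial step is pinning down the ``$M+2$ calls per presynaptic spike'': one must check that, because the affected neuron is re-evolved across the entire step with the split at $t^{\text{syn}}$ inserted on top of the feedforward partition, the synaptic split contributes exactly one additional subinterval beyond the $M+1$ subintervals of the preliminary pass. Once this is fixed, the remainder is expectation arithmetic relying on the independence of the two Poisson processes, which is what lets $\mathbb{E}[K(M+2)]$ factor and produces the clean coefficients $2\Delta t$ and $\nu\Delta t^2$.
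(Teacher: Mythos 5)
Your accounting of the preliminary pass ($M+1$ calls, expectation $1+\nu\Delta t$) matches the paper exactly. The gap is precisely in the step you yourself flag as crucial: the claim that \emph{every} presynaptic arrival costs $M+2$ calls. That is true only for the first presynaptic spike in the step. Algorithm 1 updates the affected neuron incrementally: when the $k$-th presynaptic spike arrives at $t_k$, the neuron is advanced from its last accepted state at the \emph{previous} presynaptic spike time $t_{k-1}$ (not from $t_0$) up to $t_k$, and is then preliminarily re-evolved from $t_k$ to $t_0+\Delta t$. Hence the $k$-th spike re-traverses only the feedforward events in $(t_{k-1},\,t_0+\Delta t)$, costing $2+\nu(t_0+\Delta t-t_{k-1})$ calls in expectation --- a position-dependent quantity. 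This position dependence is exactly why the paper's proof conditions on $n$ presynaptic spikes, uses the fact that the normalized arrival times $\frac{t_k-t_0}{\Delta t}$ are distributed as the order statistics of $n$ i.i.d.\ $U(0,1)$ variables, computes the conditional expectation $2n+\nu\Delta t+\nu\Delta t\frac{(n-1)(n+2)}{2(n+1)}$, and only then averages over the Poisson law of $n$. Your position-independent cost $M+2$ lets you bypass the order statistics entirely, but it overcounts whenever two or more presynaptic spikes fall in the same step with feedforward events interleaved between them; conditional on $n$ spikes the overcount is $\nu\Delta t\,\frac{n(n-1)}{2(n+1)}$.

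The reason your final number nonetheless agrees with the statement is a coincidence of orders. The paper's exact extra cost is $(2\lambda+\nu)\Delta t+\frac{\lambda\nu}{2}\Delta t^{2}+\frac{\nu}{\lambda}(e^{-\lambda\Delta t}-1)=2\lambda\Delta t+\lambda\nu\Delta t^{2}+o(\Delta t^{2})$, so the theorem's formula is established only modulo $o(\Delta t^{2})$, not as an identity. Your overcount averages to $\frac{\nu\lambda^{2}\Delta t^{3}}{6}+O(\Delta t^{4})$, i.e.\ it sits exactly in the order that the theorem's formula discards, which is why your claimed \emph{exact} equality $\mathbb{E}[K(M+2)]=(2\Delta t+\nu\Delta t^{2})\lambda$ lands on the right expression; the true expected correction cost is strictly smaller than your count at order $\Delta t^{3}$. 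To repair the argument you must replace the uniform $M+2$ cost by the incremental cost $2+\nu(t_0+\Delta t-t_{k-1})$ for the $k$-th spike, carry out the order-statistics computation (or otherwise control the re-traversed overlaps), and conclude the stated formula as an $o(\Delta t^{2})$ approximation rather than an identity.
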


\begin{proof}
	During the initial preliminary evolving from $t_{0}$ to $t_{0}+\Delta t$,
	it requires a call number of $1+\nu\Delta t$ on average. Suppose
	there are $n$ spikes fired by the presynaptic neurons during $[t_{0},t_{0}+\Delta t]$
	which happens with probability $\frac{(\lambda\Delta t)^{n}}{n!}e^{-\lambda\Delta t}$.
	Denote the spike times by $t_{1},t_{2},...,t_{n}$, where $t_{0}<t_{1}<t_{2}<...,t_{n}<t_{0}+\Delta t$.
	For the $k$-th spike, as stated in Algorithm 1, it should first update
	from time $t_{k-1}$ to $t_{k}$ and then preliminarily evolve from
	$t_{k}$ to $t_{0}+\Delta t$ to update its next spike time $t_{\text{spike}}$.
	So the call number due to the $k$-th spike is $2+\nu(t_{0}+\Delta t-t_{k-1})=2+\nu\Delta t(1-\frac{t_{k-1}-t_{0}}{\Delta t})$
	on average. Then the total average extra call number due to presynaptic
	neurons is 
	\begin{equation}
	2n+\nu\Delta t+\nu\Delta t\sum_{k=1}^{n-1}(1-\frac{t_{k}-t_{0}}{\Delta t})
	\end{equation}
	Under the condition of $n$ spikes, the distribution of $\frac{t_{1}-t_{0}}{\Delta t},\frac{t_{2}-t_{0}}{\Delta t},...,\frac{t_{n}-t_{0}}{\Delta t}$
	is the same as the order statistics of $n$ independent and uniform
	distribution $U(0,1)$. Therefore, we have
	
	\begin{equation}
	\begin{aligned}\mathbb{E}\left(\nu\Delta t\sum_{k=1}^{n-1}(1-\frac{t_{k}-t_{0}}{\Delta t})\Big|n\textrm{ spikes}\right) & =\nu\Delta t\sum_{k=1}^{n-1}\intop_{0}^{1}\frac{n!}{(k-1)!(n-k)!}x^{k-1}(1-x)^{n-k+1}dx\\
	& =\nu\Delta t\frac{(n-1)(n+2)}{2(n+1)}
	\end{aligned}
	\end{equation}
	where $\mathbb{E}$ takes the expectation. Hence the expected extra
	call number due to these spikes is
	
	\begin{equation}
	\begin{aligned}\mathbb{E}\left(2n+\nu\Delta t+\nu\Delta t\frac{(n-1)(n+2)}{2(n+1)}\right) & =\sum_{n=1}^{\infty}\left(2n+\nu\Delta t+\nu\Delta t\frac{(n-1)(n+2)}{2(n+1)}\right)\frac{(\lambda\Delta t)^{n}}{n!}e^{-\lambda\Delta t}\\
	& =(2\lambda+\nu)\Delta t+\frac{\lambda\nu}{2}\Delta t^{2}+\frac{\nu}{\;\lambda}(e^{-\lambda\Delta t}-1)\\
	& =2\lambda\Delta t+\lambda\nu\Delta t^{2}+o(\Delta t^{2})
	\end{aligned}
	\end{equation}
	Therefore, the expected call number for one neuron for one time step
	in regular method is $1+\nu\Delta t+(2\Delta t+\nu\Delta t^{2})\lambda$.
\end{proof}
We suppose the spike events of presynaptic neurons are a Poisson train
since they can be asymptotically approach a Poisson process \cite{lewis1972stochastic,cinlar1968superposition}
when the spike timing are irregular. Once a neuron fires a spike,
it should also update to this spike time and then preliminarily evolve
to update its next spike time $t_{\textrm{spike}}$, so the Poisson
spike train it goes through has rate $(1+p_{c}N)R$, where the 1 corresponds
to the spikes it fires and $p_{c}$ is the connection probability.
Therefore, the average call number per neuron for the regular method
is 

\begin{equation}
\left\langle \mathcal{N}_{\textrm{regular}}\right\rangle =\frac{T}{\Delta t}[1+\nu\Delta t+(2\Delta t+\nu\Delta t^{2})(1+p_{c}N)R]\label{eq:call re}
\end{equation}
Note that regular and ETD4RK method have the same formula of call
number. 

For the adaptive method, it is based on the regular method, so it
also has the required call number in Eq. (\ref{eq:call re}) . Besides,
we should add the extra call number due to the smaller time step $\Delta t_{S}$
during the stiff period. For the spike the neuron fires, it requires
extra call number of $TRT^{\textrm{stiff}}/\Delta t_{S}$. For the
presynaptic spikes, similar to Theorem \ref{thm: call num RK4}, the
extra call number is $p_{c}NTRp\Delta t/\Delta t_{S}$, where $p$
is the probability that the neuron is in the stiff period when its
presynaptic neurons fire. Therefore, we have

\begin{equation}
\begin{aligned}\left\langle \mathcal{N}_{\textrm{adaptive}}\right\rangle  & \approx\frac{T}{\Delta t}[1+\nu\Delta t+(2\Delta t+\nu\Delta t^{2})(1+p_{c}N)R]+TR\frac{T^{\textrm{stiff}}}{\Delta t_{S}}+p_{c}NTRp\frac{\Delta t}{\Delta t_{S}}\end{aligned}
\label{eq:call ad}
\end{equation}
For the library method, once a neuron fires a spike, we do not evolve
its $V,m,h,n$ for the next $T^{\textrm{stiff}}$ ms, $\mathit{i.e.}$,
no call of the standard RK4 scheme. With a little amendment of Eq.
(\ref{eq:call re}), we have 

\begin{equation}
\left\langle \mathcal{N}_{\textrm{library}}\right\rangle \approx(1-T^{\textrm{stiff}}R)\frac{T}{\Delta t}(1+\nu\Delta t)+\frac{T}{\Delta t}(2\Delta t+\nu\Delta t^{2})(R+p_{c}NR(1-p))\label{eq:call lib}
\end{equation}
where $T^{\textrm{stiff}}R$ is the probability that a neuron stays
in the stiff period.

\begin{figure}[H]
	\begin{centering}
		\includegraphics[width=1\textwidth]{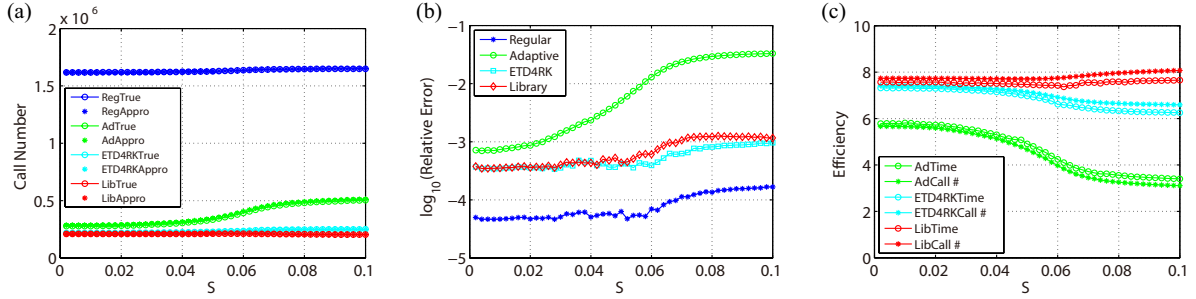}
		\par\end{centering}
	\caption{(a) The call number of standard RK4 scheme per neuron for the regular,
		adaptive, ETD4RK and library methods. The circles represent the call
		number from numerical tests while the stars represent the approximations
		from Eqs. (\ref{eq:call re})-(\ref{eq:call lib}). (b) The relative
		error in the call number between the approximation and numerical tests
		for the regular, adaptive, ETD4RK and library methods. (c) The efficiency
		measured by the time cost and call number versus the coupling strength
		$S$ for the adaptive, ETD4RK and library methods. Time steps and
		colors are the same as the one in Fig. \ref{fig:MLE}. Total run time
		is 50 seconds. \label{fig: RK4 call numbers}}
\end{figure}

As shown in Fig. \ref{fig: RK4 call numbers}, we count the call numbers
from the numerical tests and approximations in Eqs. (\ref{eq:call re}-\ref{eq:call lib})
for the regular, adaptive, ETD4RK and library methods, respectively.
These equations are indeed close approximations of the call number
achieving at least 1 digit of accuracy. We should point out that the
information of $R$ and $p$ in the approximations are obtained from
the numerical tests since they are hard to estimate directly. 

Fig. \ref{fig: RK4 call numbers}(c) gives the efficiency of the adaptive,
ETD4RK and library methods by comparing the time cost and call number
with that of the regular method. These two kinds of efficiency are
quite consistent except a bit difference in the ETD4RK and library
methods. For the ETD4RK method, this is because we use the standard
ETD4RK scheme during the stiff period which costs a bit more time
than the standard RK4 scheme. So the efficiency measured by the call
number is a bit overestimated. For the library method, this is because
when a neuron fires a spike, we should call the library and evolve
the parameters $G$ and $H$ during the stiff period which costs some
time but is not included in the efficiency measured by the call number.
When the mean firing rate is high, this extra consumed time is no
longer negligible. Even so, these two kinds of efficiency still show
good agreement for the ETD4RK and library methods. Therefore, we can
use the efficiency measured by the call number to compare.

\begin{figure}[H]
	\centering{}\includegraphics[width=1\textwidth]{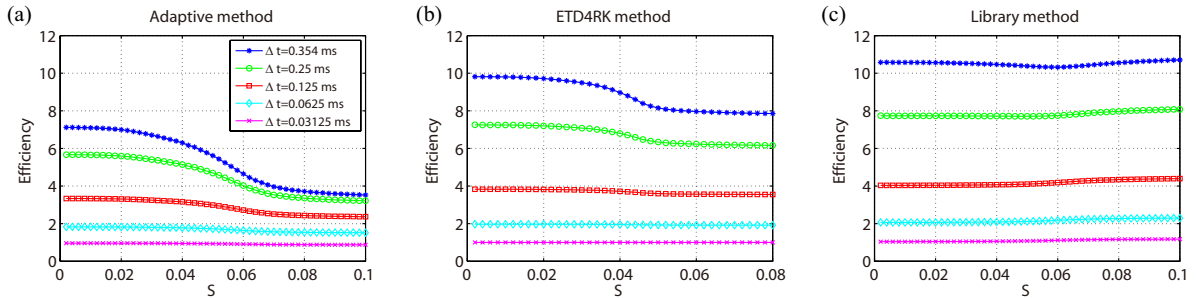}
	\caption{Efficiency measured by the expected call number for the adaptive,
		ETD4RK and library methods, respectively. The time steps are $\Delta t=0.354,0.25,0.125,0.0625,0.03125$
		ms from top to bottom. Total run time is 50 seconds. \label{fig:efficiency}}
\end{figure}

Fig. \ref{fig:efficiency} shows the efficiency for the adaptive,
ETD4RK and library methods with different time steps. When the mean
firing rate is low ($\leq20$ Hz), the adaptive method can achieve
high efficiency with a maximum 7 times of speedup. However this high
efficiency decreases rapidly with the coupling strength (mean firing
rate), since the neurons will use the smaller time step much more
often when the firing rate is high. The ETD4RK and library methods
are not so sensitive to the firing rate and can achieve a global high
efficiency. Especially, the library method can achieve a maximum 10
times of speedup for the maximum time step $\Delta t=0.354$ ms.

\section{Conclusion}

In conclusion, we have shown three methods to deal with the stiff
part during the firing event in evolving the HH equations. All of
them can use a large time step to save computational cost. The adaptive
method is the easiest to use and can retain almost all the information
of the original HH neurons like the spike shapes. However, it has
limited efficiency and is more appropriate for the dynamical regimes
with low firing rate. The ETD4RK method can obtain both precise trajectory
of the variables and high efficiency. The library method achieves
the highest efficiency, but it sacrifices the accuracy of spike shapes
and is more suitable for high accurate statistical information of
HH neuronal networks.

For the adaptive method, we should point out that the standard adaptive
method like the Runge-Kutta-Fehlberg method is not suitable for the
HH neuronal networks. When evolving a single neuron, the standard
adaptive method requires an extra RK4 calculation to decide the adaptive
time step. When the neuron fires, its equations become stiff and the
chosen time step is very small $\sim1/32$ ms as illustrated in Ref.
\cite{borgers2013exponential}. So it is no better than our adaptive
method for a single neuron. For networks, the standard adaptive method
should evolve the entire network to choose the adaptive time step.
For large-scaled networks, there are firing events almost everywhere,
then the chosen adaptive time step is always quite small $1/32$ ms
\cite{borgers2013exponential} and makes the method ineffective.

Our library method is based on the original work in Ref. \cite{sun2009library}.
The main difference is the way to build and use the library. In the
original work, once the membrane potential reaches the threshold,
we should record the input current $I^{\textrm{th}}$ and gating variables
$m^{\textrm{th}},h^{\textrm{th}},n^{\textrm{th}}$. Driving a single
neuron with constant input $I^{\textrm{th}}$, we will obtain a periodic
trajectory of membrane potential after the transient period. Especially
we intercept a stable section of the trajectory whose initial value
is the threshold $V^{\text{th}}$ and data length is the stiff period.
Given the section of membrane potential and initial values $m^{\textrm{th}},h^{\textrm{th}},n^{\textrm{th}}$,
we can obtain the corresponding reset gating variables $m^{\textrm{re}},h^{\textrm{re}},n^{\textrm{re}}$,
respectively. Denote the sample number of $I^{\textrm{th}},m^{\textrm{th}},h^{\textrm{th}},n^{\textrm{th}}$
by $N_{I},N_{m},N_{n},N_{h}$, then the size of library is $N_{I}(1+N_{m}+N_{n}+N_{h})$,
$i.e.$, the input current $I^{\textrm{th}}$ is the most important.
In our library method, the gating variables are as important as the
input current. Given a suite of $I^{\textrm{th}},m^{\textrm{th}},h^{\textrm{th}},n^{\textrm{th}}$,
we evolve the HH equations with constant input $I^{\textrm{th}}$
for $T^{\textrm{stiff}}$, then we can obtain the corresponding reset
values $I^{\textrm{re}},m^{\textrm{re}},h^{\textrm{re}},n^{\textrm{re}}$
simultaneously. The advantage of our library method lies in two aspects:
1) It is much easier to build the library. 2) Our data library is
much accurate since it can cover both the transient and the stable
periodic information. 

Finally, we emphasize that the spike-spike correction procedure \cite{rangan2007fast}
is necessary in the given methods to achieve an accuracy of fourth-order.
However, it will greatly decrease the efficiency of the advanced ETD4RK
and library methods for large-scaled networks. When there are many
neurons that fire in one time step, each neuron will call the standard
RK4 scheme a lot during the updating and preliminary evolving procedure.
When the size of the network tends infinity, the given methods with
large time step will even slower than the regular method with a small
time step as illustrated in Eqs. (\ref{eq:call re}), (\ref{eq:call ad})
and \ref{eq:call lib}). We point out that this problem can be solved
by reducing the accuracy from the fourth-order to a second-order.
In each time step, we evolve each neuron without considering the feedforward
and synaptic spikes and recalibrate their effects at the end of the
time step. Then the spike-spike correction procedure is avoided while
the methods still have an accuracy of second-order \cite{shelley2001efficient}.
Besides, the call number per neuron is merely $T/\Delta t$ for regular
and ETD4RK methods and $(1-T^{\textrm{stiff}}R)T/\Delta t$ for the
library method. Therefore, the ETD4RK and library methods can stably
achieve over 10 times of speedup with a large time step in any kinds
of networks, $e.g.$, all-to-all connected networks, large-scaled networks
and network of both excitatory and inhibitory neurons.



%
%
%


\appendix
\section*{Appendix: Parameters and variables for the Hodgkin-Huxley equations}
\label{sec:Appendix}
Definitions of $\alpha$ and $\beta$ are as follows \cite{dayan2001theoretical}: 

$\alpha_{m}(V)=(0.1V+4)/(1-\exp(-0.1V-4)),$

$\beta_{m}(V)=4\exp(-(V+65)/18),$

$\alpha_{h}(V)=0.07\exp(-(V+65)/20),$

$\beta_{h}(V)=1/(1+\exp(-3.5-0.1V)),$

$\alpha_{n}(V)=(0.01V+0.55)/(1-\exp(-0.1V-5.5)),$

$\beta_{n}(V)=0.125\exp(-(V+65)/80)$.

Other model parameters are set as $C=1\mu\textrm{F\ensuremath{\cdot}cm}^{-2}$,
$V_{Na}=50$ mV, $V_{K}=-77$ mV, $V_{L}=-54.387$ mV, $G_{Na}=120\textrm{ mS\ensuremath{\cdot}cm}^{-2}$,
$G_{K}=36\textrm{ mS\ensuremath{\cdot}cm}^{-2}$, $G_{L}=0.3\textrm{ mS\ensuremath{\cdot}cm}^{-2}$,
$V_{G}=0$ mV, $\sigma_{r}=0.5$ ms, and $\sigma_{d}=3.0$ ms \cite{dayan2001theoretical}.

\bibliographystyle{unsrt}
\bibliography{reference}

\begin{thebibliography}{10}

\bibitem{hodgkin1952quantitative}
Alan~L Hodgkin and Andrew~F Huxley.
\newblock A quantitative description of membrane current and its application to
  conduction and excitation in nerve.
\newblock {\em The Journal of physiology}, 117(4):500, 1952.

\bibitem{hassard1978bifurcation}
Brian Hassard.
\newblock Bifurcation of periodic solutions of the hodgkin-huxley model for the
  squid giant axon.
\newblock {\em Journal of Theoretical Biology}, 71(3):401--420, 1978.

\bibitem{dayan2003theoretical}
Peter Dayan and LF~Abbott.
\newblock Theoretical neuroscience: computational and mathematical modeling of
  neural systems.
\newblock {\em Journal of Cognitive Neuroscience}, 15(1):154--155, 2003.

\bibitem{somers1995emergent}
David~C Somers, Sacha~B Nelson, and Mriganka Sur.
\newblock An emergent model of orientation selectivity in cat visual cortical
  simple cells.
\newblock {\em Journal of Neuroscience}, 15(8):5448--5465, 1995.

\bibitem{mclaughlin2000neuronal}
David McLaughlin, Robert Shapley, Michael Shelley, and Dingeman~J Wielaard.
\newblock A neuronal network model of macaque primary visual cortex (v1):
  Orientation selectivity and dynamics in the input layer 4c$\alpha$.
\newblock {\em Proceedings of the National Academy of Sciences},
  97(14):8087--8092, 2000.

\bibitem{cai2005architectural}
David Cai, Aaditya~V Rangan, and David~W McLaughlin.
\newblock Architectural and synaptic mechanisms underlying coherent spontaneous
  activity in v1.
\newblock {\em Proceedings of the National Academy of Sciences of the United
  States of America}, 102(16):5868--5873, 2005.

\bibitem{rangan2007fast}
Aaditya~V Rangan and David Cai.
\newblock Fast numerical methods for simulating large-scale integrate-and-fire
  neuronal networks.
\newblock {\em Journal of Computational Neuroscience}, 22(1):81--100, 2007.

\bibitem{borgers2013exponential}
Christoph Borgers and Alexander~R Nectow.
\newblock Exponential time differencing for hodgkin--huxley-like odes.
\newblock {\em SIAM Journal on Scientific Computing}, 35(3):B623--B643, 2013.

\bibitem{petropoulos1997analysis}
Peter~G Petropoulos.
\newblock Analysis of exponential time-differencing for fdtd in lossy
  dielectrics.
\newblock {\em IEEE transactions on antennas and propagation},
  45(6):1054--1057, 1997.

\bibitem{cox2002exponential}
Steven~M Cox and Paul~C Matthews.
\newblock Exponential time differencing for stiff systems.
\newblock {\em Journal of Computational Physics}, 176(2):430--455, 2002.

\bibitem{ju2018energy}
Lili Ju, Xiao Li, Zhonghua Qiao, and Hui Zhang.
\newblock Energy stability and error estimates of exponential time differencing
  schemes for the epitaxial growth model without slope selection.
\newblock {\em Mathematics of Computation}, 87(312):1859--1885, 2018.

\bibitem{hansel1998numerical}
David Hansel, Germ{\'a}n Mato, Claude Meunier, and L~Neltner.
\newblock On numerical simulations of integrate-and-fire neural networks.
\newblock {\em Neural Computation}, 10(2):467--483, 1998.

\bibitem{shelley2001efficient}
Michael~J Shelley and Louis Tao.
\newblock Efficient and accurate time-stepping schemes for integrate-and-fire
  neuronal networks.
\newblock {\em Journal of Computational Neuroscience}, 11(2):111--119, 2001.

\bibitem{sun2009library}
Yi~Sun, Douglas Zhou, Aaditya~V Rangan, and David Cai.
\newblock Library-based numerical reduction of the hodgkin--huxley neuron for
  network simulation.
\newblock {\em Journal of computational neuroscience}, 27(3):369--390, 2009.

\bibitem{song2005highly}
Sen Song, Per~Jesper Sj{\"o}str{\"o}m, Markus Reigl, Sacha Nelson, and Dmitri~B
  Chklovskii.
\newblock Highly nonrandom features of synaptic connectivity in local cortical
  circuits.
\newblock {\em PLoS biology}, 3(3):e68, 2005.

\bibitem{ikegaya2012interpyramid}
Yuji Ikegaya, Takuya Sasaki, Daisuke Ishikawa, Naoko Honma, Kentaro Tao, Naoya
  Takahashi, Genki Minamisawa, Sakiko Ujita, and Norio Matsuki.
\newblock Interpyramid spike transmission stabilizes the sparseness of
  recurrent network activity.
\newblock {\em Cerebral Cortex}, 23(2):293--304, 2012.

\bibitem{gerstner2002spiking}
Wulfram Gerstner and Werner~M Kistler.
\newblock {\em Spiking neuron models: Single neurons, populations, plasticity}.
\newblock Cambridge university press, 2002.

\bibitem{koch1998methods}
Christof Koch and Idan Segev.
\newblock {\em Methods in neuronal modeling: from ions to networks}.
\newblock MIT press, 1998.

\bibitem{oseledec1968multiplicative}
Valery~Iustinovich Oseledec.
\newblock A multiplicative ergodic theorem. lyapunov characteristic numbers for
  dynamical systems.
\newblock {\em Trans. Moscow Math. Soc}, 19(2):197--231, 1968.

\bibitem{ott2002chaos}
Edward Ott.
\newblock {\em Chaos in dynamical systems}.
\newblock Cambridge university press, 2002.

\bibitem{thompson2002nonlinear}
John Michael~Tutill Thompson and H~Bruce Stewart.
\newblock {\em Nonlinear dynamics and chaos}.
\newblock John Wiley \& Sons, 2002.

\bibitem{parker2012practical}
Thomas~S Parker and Leon Chua.
\newblock {\em Practical numerical algorithms for chaotic systems}.
\newblock Springer Science \& Business Media, 2012.

\bibitem{zhou2010spectrum}
Douglas Zhou, Yi~Sun, Aaditya~V Rangan, and David Cai.
\newblock Spectrum of lyapunov exponents of non-smooth dynamical systems of
  integrate-and-fire type.
\newblock {\em Journal of computational neuroscience}, 28(2):229--245, 2010.

\bibitem{wolf1985determining}
Alan Wolf, Jack~B Swift, Harry~L Swinney, and John~A Vastano.
\newblock Determining lyapunov exponents from a time series.
\newblock {\em Physica D: Nonlinear Phenomena}, 16(3):285--317, 1985.

\bibitem{zhou2009network}
Douglas Zhou, Aaditya~V Rangan, Yi~Sun, and David Cai.
\newblock Network-induced chaos in integrate-and-fire neuronal ensembles.
\newblock {\em Physical Review E}, 80(3):031918, 2009.

\bibitem{lewis1972stochastic}
Peter~AW Lewis.
\newblock Stochastic point processes: statistical analysis, theory, and
  applications.
\newblock 1972.

\bibitem{cinlar1968superposition}
Erhan Cinlar and RA~Agnew.
\newblock On the superposition of point processes.
\newblock {\em Journal of the Royal Statistical Society. Series B
  (Methodological)}, pages 576--581, 1968.

\bibitem{dayan2001theoretical}
Peter Dayan and Laurence~F Abbott.
\newblock {\em Theoretical neuroscience}, volume 806.
\newblock Cambridge, MA: MIT Press, 2001.

\end{thebibliography}


\end{document}